\documentclass{article}
\usepackage[english]{babel}

\usepackage{geometry}
\geometry{a4paper,left=2cm,right=2cm,top=1.5cm,bottom=2cm}

\usepackage{amsmath}
\usepackage{graphicx}
\usepackage[colorlinks=true, allcolors=blue]{hyperref}

\usepackage{natbib}
\usepackage{multirow}%
\usepackage{amssymb,amsfonts}%
\usepackage{amsthm}%
\usepackage{mathrsfs}%
\usepackage[title]{appendix}%
\usepackage{xcolor}%
\usepackage{textcomp}%
\usepackage{manyfoot}%
\usepackage{booktabs}%
\usepackage{algorithm}%
\usepackage{algorithmicx}%
\usepackage{algpseudocode}%
\usepackage{listings}%

\usepackage{amsmath, amsthm, amssymb}

\newtheorem{theorem}{Theorem}
\newtheorem{assumption}{Assumption}
\newtheorem{definition}{Definition}

\title{From Individual Learning to Market Equilibrium: Correcting Structural and Parametric Biases in RL Simulations of Economic Models}
\author{Ruxin Chen\footnote{Department of Economics, Nagoya University, Japan, chen.ruxin.m0@s.mail.nagoya-u.ac.jp}\and Zeqiang Zhang\footnote{Institute of Neural Information Processing, Ulm University, Germany, zeqiang.zhang@uni-ulm.de} }
\date{}
\begin{document}
\maketitle

\begin{abstract}
The application of Reinforcement Learning (RL) to economic modeling reveals a fundamental conflict between the assumptions of equilibrium theory and the emergent behavior of learning agents. While canonical economic models assume atomistic agents act as `takers' of aggregate market conditions, a naive single-agent RL simulation incentivizes the agent to become a `manipulator' of its environment. This paper first demonstrates this discrepancy within a search-and-matching model with concave production, showing that a standard RL agent learns a non-equilibrium, monopsonistic policy. Additionally, we identify a parametric bias arising from the mismatch between economic discounting and RL's treatment of intertemporal costs. To address both issues, we propose a calibrated Mean-Field Reinforcement Learning framework that embeds a representative agent in a fixed macroeconomic field and adjusts the cost function to reflect economic opportunity costs. Our iterative algorithm converges to a self-consistent fixed point where the agent’s policy aligns with the competitive equilibrium. This approach provides a tractable and theoretically sound methodology for modeling learning agents in economic systems within the broader domain of computational social science.
\end{abstract}

\section{Introduction}

Reinforcement Learning (RL) has emerged as a powerful tool for simulating dynamic decision-making in complex environments, offering new possibilities for computational economic modeling \citep{sutton2018reinforcement, mosavi2020comprehensive}. By embedding learning agents within economic systems, RL promises to capture adaptive behavior in settings where agents interact over time under uncertainty. However, the direct application of standard RL algorithms to economic models risks producing systematically biased results, owing to deep conceptual differences between the two frameworks. In this paper, we identify and formalize two distinct sources of bias that arise when naively translating economic models into RL settings: a structural bias rooted in the agent's misperception of its environment, and a parametric bias stemming from misaligned interpretations of costs and discounting.

The structural bias reflects what we term an agency-structure dilemma. Many economic models, particularly in labor economics with search and matching frictions, assume a competitive equilibrium with atomistic agents—firms that take aggregate market conditions such as market tightness or wages as given \citep{pissarides2000equilibrium, krause2014modeling}. In contrast, standard RL environments are closed-loop systems where the agent directly shapes the trajectory of its environment \citep{mohiuddin2024closed}. As a result, an RL agent tasked with optimizing firm behavior may learn a policy that strategically manipulates market variables to its advantage, behaving effectively as a monopsonist. This breaks the economic model's core equilibrium condition and leads to inefficient outcomes, such as over-hiring to depress wages.

The parametric bias is subtler but equally consequential. Economic models typically incorporate both time preferences and opportunity costs of capital through parameters such as the interest rate $r$ and job destruction rate $\lambda$, which jointly determine the effective cost of job creation \citep{pissarides2000equilibrium}. RL models, however, represent time preferences via a fixed discount factor $\beta$ and often interpret costs as per-period penalties \citep{10.1609/aaai.v33i01.33017949}. This discrepancy means that the same cost parameter $c$ has fundamentally different meanings across the two domains. When transferred directly into an RL setting, it understates the total economic cost of vacancy creation, leading to distorted incentives and further deviation from the competitive equilibrium.

These two forms of bias—structural and parametric—are not merely technical issues, but reflect deeper mismatches between economic theory and RL practice. These issues raise a critical research question: how can we construct a computational model that not only corrects the agent's perception of its market power but also aligns its economic calculus with the principles of intertemporal optimization?

An intuitive response to the structural challenge might be to employ Multi-Agent Reinforcement Learning (MARL) and simulate a large population of agents directly \citep{canese2021multi, curry2022analyzing}. However, this approach faces significant hurdles. As the number of agents increases, the joint state-action space expands exponentially, leading to the ``curse of dimensionality". The simultaneous learning of all agents also creates a highly non-stationary environment, often preventing convergence to a meaningful equilibrium, all at a prohibitive computational cost. These scalability issues motivate the search for a more parsimonious framework.

To address this dual challenge, we propose a unified framework to correct both sources of bias. First, we reframe the firm's decision problem as a Mean-Field Game (MFG) \citep{gomes2014mean}, capturing the strategic interdependence of many agents via a representative agent interacting with a self-consistently evolving mean field. This resolves the structural bias by ensuring that individual agents correctly internalize their atomistic role in aggregate dynamics. Second, we calibrate the RL cost function to reflect the total intertemporal cost of vacancy creation implied by the economic model, adjusting for both the opportunity cost of capital and the expected job duration. Together, these corrections align the agent's learning objective with the theoretical equilibrium.

We demonstrate analytically and computationally that a naive RL implementation fails to replicate the model's equilibrium, while our calibrated MFG approach successfully converges to it. We further conduct ablation studies to show that correcting only one of the two biases is insufficient for equilibrium alignment, highlighting the necessity of addressing both simultaneously.

The remainder of this paper is organized as follows. Section 2 introduces the economic model and the RL framework. Section 3 diagnoses the dual sources of simulation bias through a naive RL implementation. Section 4 presents our proposed calibrated MFG solver. Section 5 provides simulation results and ablation studies. Section 6 reviews related work. Section 7 concludes.

\section{Background}

\subsection{A Search-and-Matching Framework}
\label{sec:background}
To ground our computational analysis, we adopt a standard dynamic search-and-matching framework with concave production, following \citet{SMITH1999456}. This model serves as a theoretical benchmark against which we evaluate the performance of RL-based simulations.

The economy consists of a unit mass of risk-neutral, infinitely-lived workers and a large number of identical firms. Both agents discount future payoffs at rate $r$. Production requires only labor: each firm produces a single good using a strictly increasing and strictly concave production function $f(l)$, where $l$ denotes the number of workers employed. The concavity of $f$ implies diminishing marginal returns to labor, which plays a key role in wage determination and vacancy posting behavior.

Labor market frictions are modeled via a matching function $M(U, V)$ with constant returns to scale, where $U$ is the number of unemployed workers and $V$ is the total number of vacancies. The probability that a firm fills a vacancy is $q(\theta)$, and the probability that an unemployed worker finds a job is $\theta q(\theta)$, where $\theta = V / U$ denotes market tightness. In each period, existing job matches dissolve exogenously with probability $\lambda$.

Each firm chooses the number of vacancies $v_t$ in period $t$ to maximize the present value of profits. The firm's dynamic optimization problem is described by the Bellman equation:
\begin{align}
\begin{split}
    \varphi(l_t) &= \max_{v_t} \left\{ f(l_t) - w(l_t) l_t - c v_t + \frac{1}{1+r} \varphi(l_{t+1}) \right\} \\
    \text{s.t.} & \quad l_{t+1} = (1 - \lambda) l_t + q(\theta_t) v_t
\end{split}
\end{align}
where $w(l)$ is the endogenous wage function, and $c$ is the cost per vacancy.

For analytical and computational convenience, we use the following functional forms:
\begin{align}
    f(l) &= A l^\alpha, \\
    q(\theta) &= a \theta^{-\phi}, \\
    w(l) &= \frac{\eta \alpha A l^{\alpha-1}}{\eta \alpha + 1 - \eta} + (1 - \eta) b + \eta c \theta,
\end{align}
where $\eta \in (0,1)$ governs the bargaining power of workers.

\subsection{Reinforcement Learning}

RL is a computational framework for sequential decision-making under uncertainty \citep{sutton2018reinforcement}. Originally developed in the fields of control theory and artificial intelligence, RL has gained traction in economics as a tool for modeling adaptive agents in dynamic environments.

An RL agent interacts with a stochastic environment modeled as a Markov Decision Process (MDP), defined by the tuple $(\mathcal{S}, \mathcal{A}, P, R, \gamma)$, where $\mathcal{S}$ is the state space, $\mathcal{A}$ the action space, $P$ the transition kernel, $R$ the reward function, and $\gamma \in (0,1)$ the discount factor. The agent aims to learn a policy $\pi: \mathcal{S} \rightarrow \Delta(\mathcal{A})$ that maximizes the expected cumulative discounted reward:
\begin{equation}
    \pi^* = \arg\max_\pi \; \mathbb{E}_\pi \left[ \sum_{t=0}^\infty \gamma^t R(s_t, a_t) \right].
\end{equation}

Unlike traditional dynamic programming, RL does not require prior knowledge of $P$ or $R$; instead, it relies on simulation-based methods such as Q-learning, policy gradient, and actor-critic algorithms, often using function approximation techniques to scale to high-dimensional spaces.

However, standard RL formulations typically assume that the agent operates in a fixed and stationary environment. In economic contexts involving strategic interaction among many agents—such as firms collectively shaping labor market conditions—this assumption fails. To accommodate endogeneity and equilibrium consistency, additional modeling frameworks are required.

\section{The Pitfalls of Naive  RL-based Simulation}

In this section, we examine what happens when we naively translate a theoretically grounded economic model into an RL simulation. We begin by solving the steady-state equilibrium of the analytical model described in Section~\ref{sec:background}. We then present the outcome of an RL-based simulation under identical primitives. Surprisingly, the RL agent converges to a markedly different policy. To diagnose this discrepancy, we conduct a mathematical analysis of the RL optimization problem and identify two distinct sources of error: a structural bias caused by endogeneity of market aggregates, and a parametric bias due to misaligned cost and discounting assumptions. Together, these constitute what we refer to as the dual simulation bias problem.

\subsection{The Theoretical Benchmark}

The job-creation condition in the economic model is derived from the firm's Bellman equation and envelope condition:
\begin{equation}
    f'(l) - w(l) - w'(l)l = \frac{(r + \lambda)c}{q(\theta)}.
\end{equation}

This condition characterizes the firm's optimality under the assumption that $\theta$ is an exogenous market parameter. Combined with equilibrium conditions for labor market flows and wage determination, the model yields a unique steady-state characterized by $(l^*, u^*, q^*, w^*, v^*, \theta^*)$.
	\begin{table}[!htbp]
		\caption{Default experimental parameters.}
		\label{tab:1}
		\begin{center}
		\begin{tabular}{c|c|c}
		\toprule
		\textbf{Symbol} & \textbf{Description}& \textbf{Value} \\
		\midrule
            $A$ & Productivity & $1$\\
		$a$ & Matching efficiency & $0.471$\\
		$\alpha$ & Parameter in production function & $0.667$\\
		$\lambda$ & Separation rate & $0.0144$\\
		$\eta$ & Worker's bargaining power & $0.6$\\
		$c$ & Vacancy cost & $0.273$\\
		$\phi$ & Matching elasticity& $0.6$\\
            $   r$ & Interest rate& $0.01$\\

		\bottomrule

		\end{tabular}
		\end{center}
		\end{table}

Using the calibrated parameters in Table~\ref{tab:1}, we solve the full system of equations and obtain the theoretical steady state reported in Table~\ref{tab:2} (see Appendix~\ref{appendix:steady_state}). Notably, the equilibrium value of market tightness is $\theta^* = 0.767$.

\subsection{A Naive RL Simulation and its Divergence}
\label{sec:naive_translation}
To test whether RL can reproduce this equilibrium, we construct a single-agent RL environment that mimics the firm's optimization problem. The RL agent observes the current labor stock $l_t$ and chooses a vacancy posting level $v_t$, transitioning to $l_{t+1} = (1 - \lambda) l_t + q(\theta_t) v_t$, where $\theta_t = v_t / u_t$ and $u_t = 1 - l_t$. The reward is defined as:
\begin{equation}
    r_t = f(l_t) - w(l_t, \theta_t)l_t - c v_t,
\end{equation}

and the objective is to maximize the expected cumulative reward:
\begin{equation}
    \mathbb{E} \left[\sum_{t=0}^\infty \beta^t r_t \right].
\end{equation}

The experiment details are summarized in Appendix~\ref{appendix:RL}. The learned steady state of the RL agent can then be compared to the theoretical benchmark.

\begin{figure}[htbp]
\centering
\includegraphics[width=0.9\textwidth]{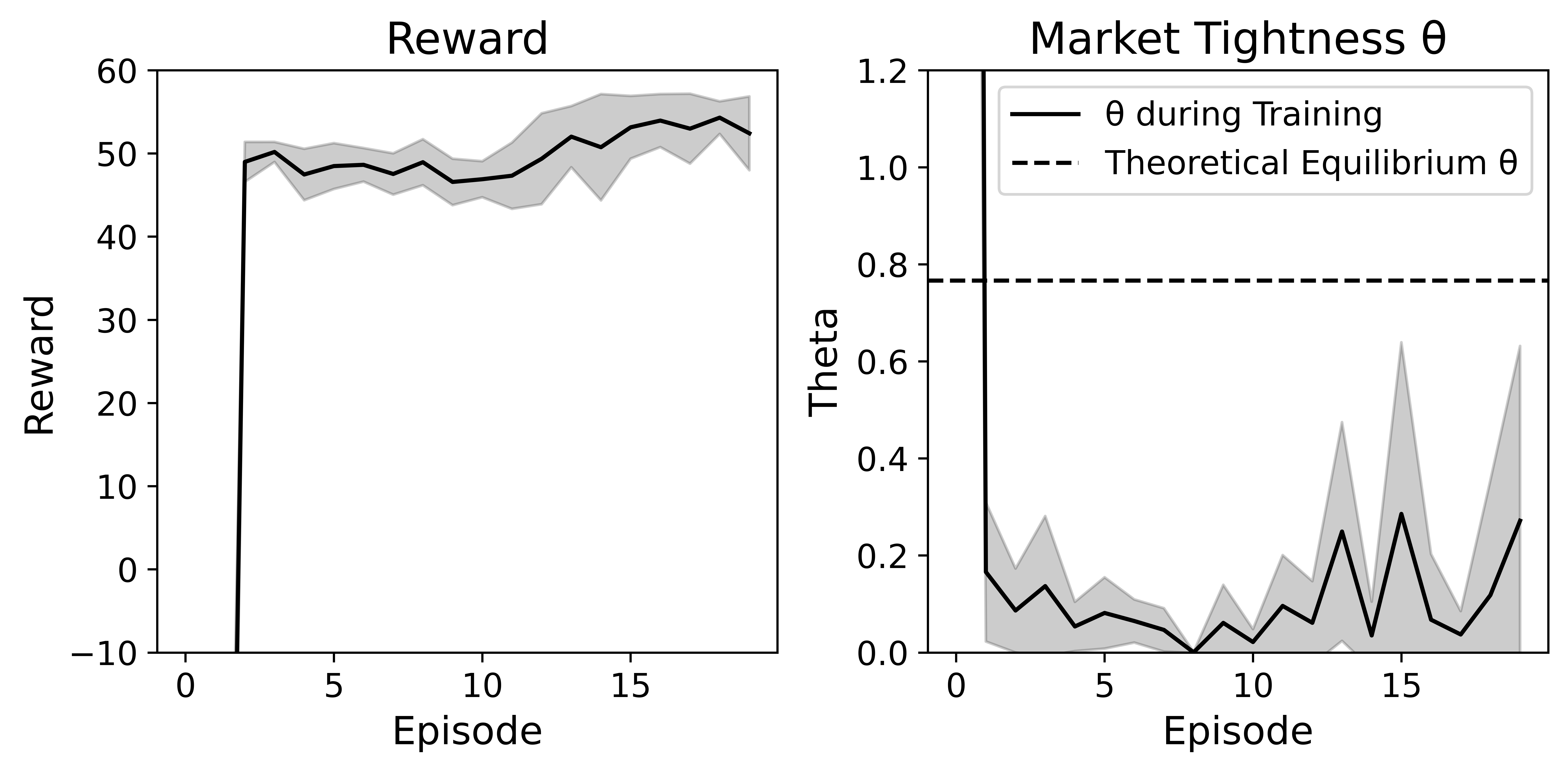}
\caption{Market tightness $\theta$: theoretical benchmark vs. RL outcome. The left panel illustrates the variation of reward during the training process, where the stabilization of reward indicates that the agent has nearly converged to the optimal policy. The right panel depicts the changes in $\theta$ over the course of training; once the agent's strategy has converged, the market tightness fluctuates around $0.1$, which is significantly lower than the theoretical value of $0.767$ derived from economic models. The shaded regions indicate the standard deviations from five independent runs.}
\label{fig:compare_theta}
\end{figure}

As shown in Figure~\ref{fig:compare_theta}, the RL agent learns a policy that results in a market tightness far below the equilibrium level, raising the question: why?

\paragraph{Structural Bias: The ``Market Manipulator" Effect}

To understand this discrepancy, we analyze the RL agent's objective in steady state:
\begin{equation}
    J(l) = \frac{1}{1 - \beta} \left[f(l) - w(l, \theta)l - c v \right].
\end{equation}

Differentiating with respect to $v$, and applying the chain rule:
\begin{align}
\label{eq:rl_foc}
\frac{\partial J(l)}{\partial v} = \frac{1}{1-\beta} \left[ (f' - w - w' l) \cdot \frac{\partial l}{\partial v} - \frac{\partial \theta}{\partial v} \cdot l \cdot \frac{\partial w}{\partial \theta} - c \right] = 0.
\end{align}

Since $l = \frac{q(\theta)}{\lambda} v$, we compute:
\begin{equation}
\frac{\partial l}{\partial v} = \frac{q}{\lambda} + \frac{v}{\lambda} \cdot \frac{dq}{d\theta} \cdot \frac{\partial \theta}{\partial v}.
\end{equation}

Substituting into (\ref{eq:rl_foc}), we obtain:
\begin{equation}
    (f' - w - w'l) = \frac{c + \frac{\partial \theta}{\partial v} \cdot l \cdot \frac{\partial w}{\partial \theta}}{\frac{q}{\lambda} + \frac{v}{\lambda} \cdot \frac{dq}{d\theta} \cdot \frac{\partial \theta}{\partial v}}.
\end{equation}

If the agent internalizes that $\theta$ depends on $v$, i.e., $\frac{\partial \theta}{\partial v} \ne 0$, it will deliberately reduce $v$ to suppress $\theta$ and lower the wage $w$. In this case, the RL agent learns a manipulative policy—resembling monopsonistic behavior—that deviates from the competitive market assumption.

Thus, we identify a structural bias: the RL agent acts as a “market manipulator” because the simulation environment allows it to control $\theta$, violating the “tightness-taker” assumption central to the economic model.

\paragraph{Parametric Bias: The Cost-Discounting Mismatch}

A second source of error lies in the treatment of cost and time discounting. In the economic model, the expected discounted cost of hiring via a vacancy is:
\begin{equation}
    \frac{(r + \lambda)c}{q(\theta)},
\end{equation}
where $(r + \lambda)$ reflects the opportunity cost of capital and the job separation hazard.

However, in the RL simulation, the cost $c$ is discounted using the factor $\beta$ and summed over time. Since expected job duration is $1/\lambda$, the effective cost becomes:
\begin{equation}
    \frac{\lambda c}{q(\theta)}.
\end{equation}

This mismatch in cost specification leads the RL agent to underestimate the true economic cost of vacancy posting. We refer to this as a parametric bias. The RL environment fails to faithfully represent the economic objective because it lacks an adjustment term that aligns with the opportunity cost logic of intertemporal hiring decisions.

Together, these two biases—structural and parametric—explain the divergence between the RL simulation and the analytical equilibrium. In the next section, we propose a unified correction framework based on mean-field games and cost calibration to eliminate both sources of error.

\section{Calibrated Mean-Field Reinforcement Learning}

Having diagnosed the dual biases—structural and parametric—that arise from a naive translation of economic theory into RL simulations, we now propose a unified correction framework. Our approach addresses:

\begin{itemize}
    \item The \textbf{structural bias} by embedding the agent within a \emph{MFG} formulation, thereby ensuring that aggregate variables like $\theta$ are treated as exogenous during optimization but endogenously updated across iterations.
    \item The \textbf{parametric bias} by introducing an \emph{effective cost parameter} $c_{\text{eff}}$ that reflects the economically consistent long-run cost of hiring under the correct intertemporal assumptions.
\end{itemize}

Together, these corrections restore alignment between the RL simulation and the theoretical model, enabling the agent to learn behavior consistent with the competitive equilibrium.

\subsection{Correcting the Structural Bias: A Mean-Field Formulation}

MFG theory provides a scalable solution concept for systems with a large number of interacting agents. Developed independently by \citet{lasry2007mean} and \citet{huang2006large}, MFGs model each agent as optimizing in response to a time-varying mean field that summarizes the behavior of the population. In equilibrium, the mean field is consistent with the distribution of agents generated by their optimal policies, forming a fixed point.

Let $\mu_t$ denote the mean field at time $t$—for example, the distribution of states or actions. The representative agent solves:
\begin{equation}
    \pi^* = \arg\max_{\pi} \; \mathbb{E}_\pi \left[ \sum_{t=0}^\infty \gamma^t R(s_t, a_t, \mu_t) \right]
\end{equation}
subject to transition dynamics $s_{t+1} \sim P(s_{t+1} | s_t, a_t, \mu_t)$ and a consistency condition:
\begin{equation}
    \mu_{t+1} = \mathcal{F}(\mu_t, \pi^*).
\end{equation}

In practice, solving MFGs analytically can be intractable. \textit{Mean Field Reinforcement Learning} (MF-RL) offers a practical alternative by approximating the MFG fixed point via an iterative procedure \cite{a15030073}:
\begin{enumerate}
    \item Initialize a mean field $\mu^{(0)}$;
    \item Given $\mu^{(k)}$, solve the single-agent RL problem to obtain policy $\pi^{(k)}$;
    \item Update the mean field via $\mu^{(k+1)} = \mathcal{F}(\mu^{(k)}, \pi^{(k)})$;
    \item Repeat until convergence to $(\pi^*, \mu^*)$.
\end{enumerate}

This fictitious-play-style algorithm is particularly well-suited to economic environments where agents are atomistic yet strategically interdependent. In our context, MF-RL provides a principled way to reconcile RL with equilibrium behavior in search-and-matching labor markets. To prevent the agent from behaving as a ``market manipulator”, we adopt a mean-field approximation of the full multi-agent economy. In the limit of a continuum of firms, each agent is infinitesimal and thus treats aggregate variables—like market tightness $\theta$—as exogenous when solving its optimization problem.

Formally, denote the mean field at iteration $k$ as $\theta^{(k)}$. The RL agent then solves:

\begin{align}
\begin{split}
    \pi^{(k)} & = \arg\max_{\pi} \mathbb{E}_{\pi}\left[ \sum_{t=0}^\infty \beta^t \left(f(l_t) - w(l_t, \theta^{(k)}) l_t - c v_t \right) \right], \\
    \text{s.t.} & \quad l_{t+1} = (1 - \lambda) l_t + q(\theta^{(k)}) v_t.
\end{split}
\end{align}

After solving for the optimal policy $\pi^{(k)}$, we simulate a population of agents using this policy to compute the implied new aggregate $\theta^{(k+1)}$. The fixed point $\theta^*$ is reached when $\theta^{(k+1)} \approx \theta^{(k)}$.

This framework ensures that the RL agent does not internalize the effect of its own actions on macro variables like $\theta$, thus preserving the atomistic assumption of the economic model.

\subsection{Correcting the Parametric Bias: Cost Calibration via $c_{\text{eff}}$}
\label{sec:param_bias}
One of the key conceptual mismatches between standard economic models and RL lies in how cost and intertemporal trade-offs are encoded. In economic models, firms evaluate investment decisions not just by future returns, but by comparing them against the opportunity cost of capital. In contrast, standard RL implementations typically internalize time via a discount factor $\gamma$, without explicitly accounting for alternative uses of resources—such as financial savings—available to economic agents.

To illustrate this divergence, consider the firm's decision to create a vacancy. In the economic model, each vacancy costs $c$ per period while it remains unfilled or occupied. The firm expects to recover this cost over the duration of a job, which terminates at a Poisson rate $\lambda$. However, the firm must also consider the alternative: investing the same funds elsewhere, earning a return at rate $r$. The relevant cost for the firm is therefore not just the flow cost $c$, but the \emph{present value of foregone investment income} over the expected lifetime of the job.

This reasoning yields the job creation condition:
\begin{equation}
f'(l) - w(l) - w'(l)l = \frac{(r + \lambda)c}{q(\theta)},
\end{equation}
where $(r + \lambda)$ reflects both the hazard of separation and the interest rate—together determining the effective decay rate of the employment relationship's present value. This structure arises from the Bellman equation of the firm that embeds both job turnover and capital cost.

In contrast, in the naive RL implementation, the firm-agent maximizes
\begin{equation}
J = \mathbb{E} \left[ \sum_{t=0}^\infty \gamma^t \left(f(l_t) - w(l_t, \theta_t)l_t - c v_t \right) \right],
\end{equation}
where $\gamma \approx \frac{1}{1 + r}$ serves as a discount factor. However, the cost term $c$ is applied per period without adjustment. While the job separation rate $\lambda$ is often included in the environment's transition dynamics, it affects only the firm's employment flow, not how the cost is valued.

This creates a subtle but important asymmetry: the RL agent understands that jobs are finite-lived (via $\lambda$), but fails to account for the fact that vacancy costs represent foregone capital returns (via $r$). As a result, it undervalues the true economic cost of creating a job. The agent behaves as if the entire budget is held in “real terms”, ignoring the opportunity to allocate capital elsewhere.

To reconcile this difference, we introduce an \emph{effective cost} parameter $c_{\text{eff}}$, which internalizes both the job’s expected duration and the capital opportunity cost. Specifically, we define:
\begin{equation}
c_{\text{eff}} := \left(1 + \frac{r}{\lambda} \right) c,
\end{equation}
which implies that the true economic cost of a vacancy is scaled up by the firm’s inability to invest that capital elsewhere. The term $\frac{r}{\lambda}$ reflects how the firm trades off posting a vacancy versus saving the money at interest rate $r$ over the job’s expected life $\frac{1}{\lambda}$.

From an economic standpoint, $c_{\text{eff}}$ represents the \textit{capitalized long-run cost} of a job match. It answers the question: “what is the present value of committing $c$ per period over a job duration of $\frac{1}{\lambda}$, while giving up a return of $r$?” This calibration ensures that the RL agent evaluates its hiring decision in a way that is equivalent to the firm in the economic model.

Accordingly, we modify the reward function in the RL environment as:
\begin{equation}
r_t = f(l_t) - w(l_t, \theta_t)l_t - c_{\text{eff}} v_t.
\end{equation}

This change is more than a numerical fix; it embeds an economic worldview into the agent's objective. It ensures that the simulated firm's hiring policy reflects not only the timing of profits and job turnover, but also the broader capital allocation logic inherent to investment decisions in equilibrium macroeconomic models.

This adjustment is critical to bridge the second of the two core modeling biases we identify in this paper. Without this calibration, RL-based simulations will consistently produce hiring policies that deviate from their economic counterparts—not because of poor optimization, but because the agent is solving a different, and economically inconsistent, objective.

\subsection{The Combined Algorithm: Calibrated MF-RL}

We now combine the structural and parametric corrections into a unified procedure, which we call \emph{Calibrated Mean-Field Reinforcement Learning} (Calibrated MF-RL). The agent solves its problem under a fixed mean field $\theta^{(k)}$ using a calibrated cost parameter $c_{\text{eff}}$, and the mean field is iteratively updated until convergence.

\begin{algorithm}[H]
\caption{Calibrated Mean-Field Reinforcement Learning}
\label{alg:cal_mf_rl}
\begin{algorithmic}[1]
\State Compute $c_{\text{eff}} = \frac{(r + \lambda)}{\lambda} c$
\State Initialize mean field $\theta^{(0)}$
\For{$k = 0, 1, 2, \dots$ until convergence}
    \State Solve RL problem using $r_t = f(l_t) - w(l_t, \theta^{(k)}) l_t - c_{\text{eff}} v_t$
    \State Obtain optimal policy $\pi^{(k)}$
    \State Simulate firms under $\pi^{(k)}$ to compute updated $\theta^{(k+1)}$
\EndFor
\State Return equilibrium $\pi^*$, $\theta^*$
\end{algorithmic}
\end{algorithm}

This algorithm guarantees consistency between the RL objective and the economic theory in both dimensions:
\begin{itemize}
    \item \textbf{Structure:} By fixing $\theta$ during RL training, the agent behaves atomistically.
    \item \textbf{Parameters:} By calibrating $c_{\text{eff}}$, the reward aligns with intertemporal optimization.
\end{itemize}

In the next section, we empirically validate the effectiveness of this framework, showing that the resulting simulation converges to the theoretical steady-state equilibrium.

\subsection{Convergence of the RL-based Mean Field Solver}

Having introduced a corrected RL framework that aligns agent incentives with economic equilibrium, it is natural to ask: under what conditions does this procedure converge to a valid equilibrium? In this subsection, we provide a theoretical justification for the convergence of our RL-based mean-field learning algorithm using a fixed-point argument.

Recall that in our environment, the market tightness $\theta$ serves as the key aggregate variable mediating strategic interaction. At each iteration, we fix a value of $\theta$, solve a Markov Decision Process (MDP) to obtain the agent’s optimal policy $\pi^*_\theta$, and then update the macro variable based on population-level behavior under $\pi^*_\theta$. Formally, this defines a composite mapping:
\begin{equation}
    \Psi(\theta) = \Phi(\pi^*_\theta),
\end{equation}

where $\Phi$ is the environment-level aggregator that computes the updated mean field from the agent's behavior. A fixed point $\theta^*$ of $\Psi$ corresponds to a \textit{Mean Field Equilibrium} in which individual behavior and macro dynamics are mutually consistent.

To analyze the convergence of this iterative process, we invoke Theorem~\ref{thm:convergence} (see Appendix~\ref{app:convergence}). We assume:
\begin{itemize}
    \item[(A1)] The policy map $\theta \mapsto \pi^*_\theta$ is Lipschitz continuous. This assumption is justified in our case by the smoothness of the reward function and the use of stable, regularized RL algorithms.
    \item[(A2)] The environment response map $\pi \mapsto \Phi(\pi)$ is Lipschitz. In our labor market setting, this is ensured by the matching function $M(U, V)$ being smooth and strictly concave, and the macro variables (e.g., aggregate employment and unemployment) being continuous aggregations over a population of homogeneous firms.
\end{itemize}

Under these conditions, the composite mapping $\Psi$ is a contraction whenever the product of Lipschitz constants $L_1 L_2 < 1$. This is difficult to prove as deep learning algorithm involved in such a system. However, this holds in our implementation due to the following empirical observations: (i) the agent's optimal vacancy posting behavior changes gradually in response to shifts in $\theta$; (ii) macro variables respond smoothly to marginal changes in the agent's policy.

By the Banach Fixed Point Theorem, the mean field sequence $\theta_{k+1} = \Psi(\theta_k)$ converges exponentially to a unique fixed point $\theta^*$, which corresponds to the equilibrium of the corrected system. This validates the computational soundness of our two-layer RL solver.

\section{Simulation Results}

In this section, we evaluate the effectiveness of our proposed Mean-Field RL framework, combined with cost calibration, in resolving both structural and parametric biases identified in Section~\ref{sec:naive_translation}. We present simulation results that demonstrate convergence toward the theoretical economic equilibrium. 

We consider the same search-and-matching economy described in Section~\ref{sec:background}. The RL agent is trained using a standard Deep Deterministic Policy Gradient (DDPG) algorithm with experience replay \cite{DBLP:journals/corr/LillicrapHPHETS15}. For the mean-field solver, we initialize a guess for the aggregate policy (market tightness $\theta$), solve the agent’s RL problem given that field, and update the aggregate statistics accordingly. This process is repeated iteratively until convergence. The effective vacancy cost $c_{\text{eff}}$ is computed as described in Section~\ref{sec:param_bias}. We summarize the details of the simulation in Appendix~\ref{appendix:RL}.

Figure~\ref{fig:baseline_compare} shows the key steady-state variables—market tightness $\theta$, unemployment $u$, and vacancy $v$—computed from the theoretical model and those generated by the fully corrected RL simulation (using MFG and $c_{\text{eff}}$). The simulation closely aligns with the theoretical benchmark, demonstrating that our method successfully replicates the equilibrium predicted by economic theory. We also conduct ablation studies (see Appendix~\ref{appendix:ablation}) to show that correcting only one of the two biases is insufficient to recover the correct behavior.

\begin{figure}[ht]
    \centering
    \includegraphics[width=0.8\textwidth]{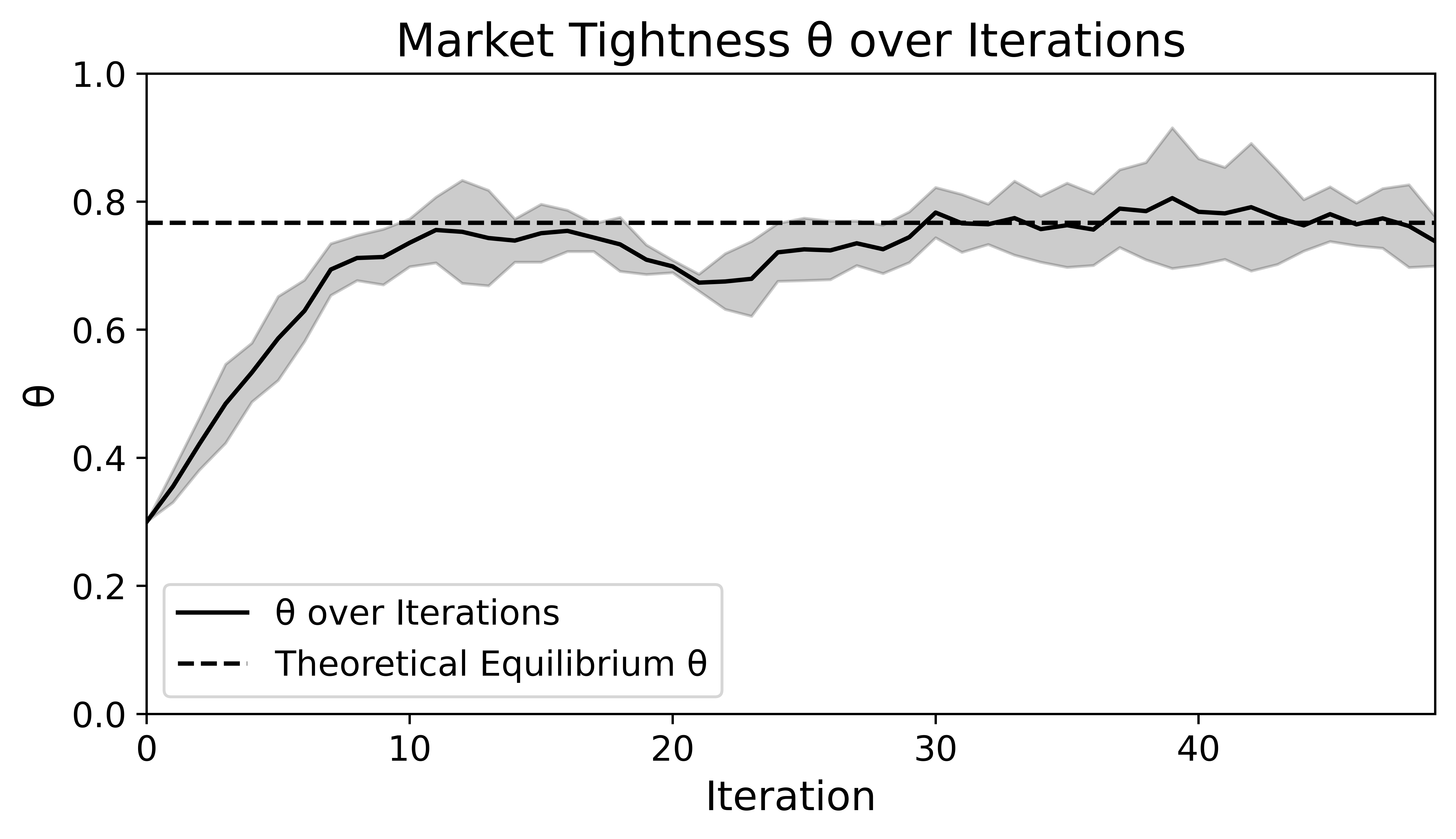}
    \caption{Comparison between theoretical equilibrium and fully corrected RL simulation. The figure depicts the changes in $\theta$ over iterations; once it has converged, the market tightness fluctuates around the theoretical value of $0.767$ derived from economic models. The shaded regions indicate the standard deviations from five independent runs.}
    \label{fig:baseline_compare}
\end{figure}

These results provide strong evidence supporting our central claims. First, we observe that a naive RL implementation fails to reproduce the theoretical economic equilibrium due to the presence of both structural and parametric modeling biases. Each of these biases produces a qualitatively distinct distortion: structural bias leads the agent to behave strategically as a market manipulator, while parametric bias causes it to misestimate the true economic cost of posting vacancies. Crucially, we find that correcting only one of these issues is insufficient. It is only when both the mean-field interaction structure and the calibrated cost formulation are jointly applied that the learned agent behavior aligns with the competitive equilibrium, effectively simulating a ``price-taking'' firm. 

This reinforces our broader thesis: simulating economic models with RL demands careful alignment between the agent’s learning objective and the theoretical underpinnings of the economic environment, including both institutional assumptions and the interpretation of cost parameters.

\section{Related Work}

The integration of RL into economic modeling has attracted growing interest in recent years, driven by the desire to simulate complex environments where agents learn and adapt over time. A number of studies have explored the use of RL to replicate or extend classical economic models, including dynamic pricing, household consumption, and labor market participation \cite{10.1145/3677052.3698621,curry2022analyzing,atashbar2023ai}. However, these applications often assume that RL can be directly applied to economic environments with minimal conceptual translation. Our work contributes to a line of research that challenges this assumption, highlighting the theoretical and empirical consequences of misalignments between RL frameworks and economic modeling principles.

On the structural side, several papers have adopted MARL or MF-RL to study strategic interactions among economic agents. For example, \citet{yang2018mean} formalize MF-RL as a scalable solution for approximating Nash equilibria in large-agent systems. In parallel, recent works in computational economics (e.g., \citet{angiuli2021reinforcement}) have begun leveraging MF-RL to approximate rational expectations or decentralized coordination, particularly when analytical solutions are intractable. While these approaches emphasize scalability and equilibrium approximation, they often abstract away from the interpretation of RL rewards and cost structures in economically meaningful terms. Our work builds on this literature by showing that equilibrium replication requires not only structural realism but also parametric calibration.

On the parametric side, there has been less attention paid to the economic interpretation of per-period rewards and discounting in RL. Traditional economic theory typically grounds costs and benefits in explicit opportunity cost terms, incorporating capital markets and risk-adjusted returns, as seen in the classic search-and-matching models of \citet{pissarides2000equilibrium}. In contrast, RL formulations typically collapse all future valuation into a single discount factor and apply fixed per-period costs. This difference, while subtle, has significant implications for equilibrium behavior. 

Finally, our work relates to the broader literature on agent-based macroeconomic modeling (ABM), where learning agents are used to simulate aggregate dynamics (e.g., \citet{tesfatsion2006agent,dwarakanath2024empirical}). Unlike traditional ABM approaches, which often rely on rule-based heuristics, we focus on optimizing agents whose objectives are grounded in microeconomic theory but operationalized through RL. This hybrid perspective requires reconciling computational techniques with theoretical assumptions—a tension that lies at the heart of our contribution.

While a growing number of studies apply learning algorithms to ABM, the majority of these efforts rely on explicitly simulated multi-agent systems. In such models, each firm or household is represented as an individual agent, and learning is applied either through RL, evolutionary dynamics, or rule updating. Examples include works like \citet{zheng2022ai} and \citet{10975741}, which employ multi-agent deep RL to study coordination, business cycles, or labor market dynamics. Although these frameworks offer high flexibility, they often suffer from scalability issues and convergence instability, particularly when learning is decentralized and strategic interactions are dense. In contrast, our approach leverages the Mean-Field RL framework to approximate the many-agent economy using a single representative learning agent interacting with a macroeconomic field. This allows us to capture equilibrium behavior while maintaining computational tractability and theoretical clarity.

By identifying and correcting both structural and parametric biases in RL-based simulations of economic systems, our work offers a blueprint for more faithful computational modeling of equilibrium behavior. We see this as a necessary step toward the broader and more reliable application of RL in economic analysis.

\section{Conclusion}

This paper demonstrates that naively applying RL to economic models can produce systematically biased outcomes due to a mismatch between the agent's optimization structure and the assumptions embedded in economic theory. Specifically, we identify two core biases: a \textit{structural bias}, wherein an RL agent learns to strategically manipulate endogenous variables like market tightness due to the closed-loop simulation environment; and a \textit{parametric bias}, stemming from the misalignment between the RL cost structure and the economically meaningful opportunity cost of capital. To address these challenges, we propose a unified correction framework based on MF-RL and cost calibration via $c_{\text{eff}}$, which restores theoretical consistency and leads to convergence toward the competitive equilibrium.

\begin{appendices}

\section{Equations for Steady State}
\paragraph{Equations for Steady State}
\begin{align}
    \begin{split}
    \label{eq1}
        \alpha A l^{\alpha-1}-\frac{\eta\alpha^2Al^{\alpha-1}}{\eta \alpha+1-\eta}-(1-\eta)b-\eta c \theta -\frac{(r+\lambda)c}{q(\theta)}&=0\\
        w-\frac{\eta\alpha Al^{\alpha-1}}{\eta \alpha+1-\eta}-(1-\eta)b-\eta c \theta&=0\\
        \lambda l- qv&=0\\
        q-a\theta^{-\phi}&=0\\
        \theta -\frac{v}{u}&=0\\
        l+u-1&=0\\
        b-0.6w&=0\\
    \end{split}
\end{align}
\paragraph{Theoretical Steady State}
\label{appendix:steady_state}
	\begin{table}[!htbp]
		\caption{Theoretical steady state.}
		\label{tab:2}
		\begin{center}
		\begin{tabular}{c|c|c|c|c|c}
		\toprule
		\textbf{$l$} & \textbf{$u$}& \textbf{$q$} &\textbf{$w$} &\textbf{$v$} &\textbf{$\theta$} \\
		\midrule
            $0.967$ & $0.033$ & $0.552$ &$0.831$&$0.025$&$0.767$\\

		\bottomrule

		\end{tabular}
		\end{center}
		\end{table}
\section{RL Simulation settings}
\label{appendix:RL}
In our simulation, we DDPG algorithm to optimize the decision-making policy of the learning agent. This section details the architectural and training configurations used in our implementation.

\subsection{Agent Architecture}

We implement an actor-critic architecture as follows:

\begin{itemize}
    \item \textbf{Actor Network}: The actor network maps observed states to continuous actions. It consists of a 3-layer feedforward neural network with ReLU activations and a final $\tanh$ output layer to bound actions:
    \begin{align*}
        \text{Actor: } &\quad x \rightarrow \text{ReLU}(W_1 x + b_1) \rightarrow \text{ReLU}(W_2 x + b_2) \rightarrow \tanh(W_3 x + b_3)
    \end{align*}
    \item \textbf{Critic Network}: The critic takes state-action pairs and estimates their Q-values. It is also a 3-layer feedforward network, where the input is the concatenation of the state and action vectors:
    \begin{align*}
        \text{Critic: } &\quad (s,a) \rightarrow \text{ReLU}(W_1 [s,a] + b_1) \rightarrow \text{ReLU}(W_2 x + b_2) \rightarrow W_3 x + b_3
    \end{align*}
\end{itemize}

The actor and critic networks each have 256 hidden units per layer. The actor is optimized using Adam with a learning rate of $5 \times 10^{-5}$, while the critic uses a learning rate of $5 \times 10^{-4}$. Target networks are updated using a Polyak averaging factor $\tau = 0.005$. Discount factor is set to $\gamma = 0.99$.

\subsection{Training Procedure}

We use a replay buffer with a maximum capacity of $10^5$ transitions to store the agent’s experiences. At each training step, we sample mini-batches of 256 transitions to update the networks. The agent performs soft updates of the target networks and alternates between critic and actor updates following the standard DDPG procedure.

Each training iteration (used in mean-field reinforcement learning) consists of 50 independently simulated episodes, each of length 200 steps. 

\subsection{State and Action Spaces}

The environment state at each time step $t$ is defined as the vector $o_t = (l_t, u_t)$, where:
\begin{itemize}
    \item $l_t$ denotes the current employment rate,
    \item $u_t$ is the current unemployment rate.
\end{itemize}

The action space consists of a single continuous action $v_t$, which represents the vacancy posting decision made by the firm agent. The output of the actor network is interpreted as the normalized vacancy posting intensity.

\section{Ablation Study}
\label{appendix:ablation}

To verify that both corrections are necessary, we conduct ablation experiments:

\begin{itemize}
    \item \textbf{Only Structural Correction (MFG only):} The agent solves a mean-field game, but the cost parameter remains uncalibrated. This leads to over-optimistic hiring, as the agent underestimates the true economic cost of vacancies. The resulting $\theta$ is too high relative to the theoretical benchmark.
    \item \textbf{Only Parametric Correction ($c_{\text{eff}}$ only):} The cost function is calibrated, but the environment is modeled as a single-agent MDP. The agent behaves as a market manipulator, suppressing vacancy creation to lower wages. The resulting $\theta$ is significantly below equilibrium.
\end{itemize}

Figure~\ref{fig:ablation} shows the learned $\theta$ in these two conditions. Compared with Figure~\ref{fig:baseline_compare}, we could come to the conclusion that only when both corrections are applied does the simulation converge to the correct value.

\begin{figure}[ht]
    \centering
    \includegraphics[width=0.9\textwidth]{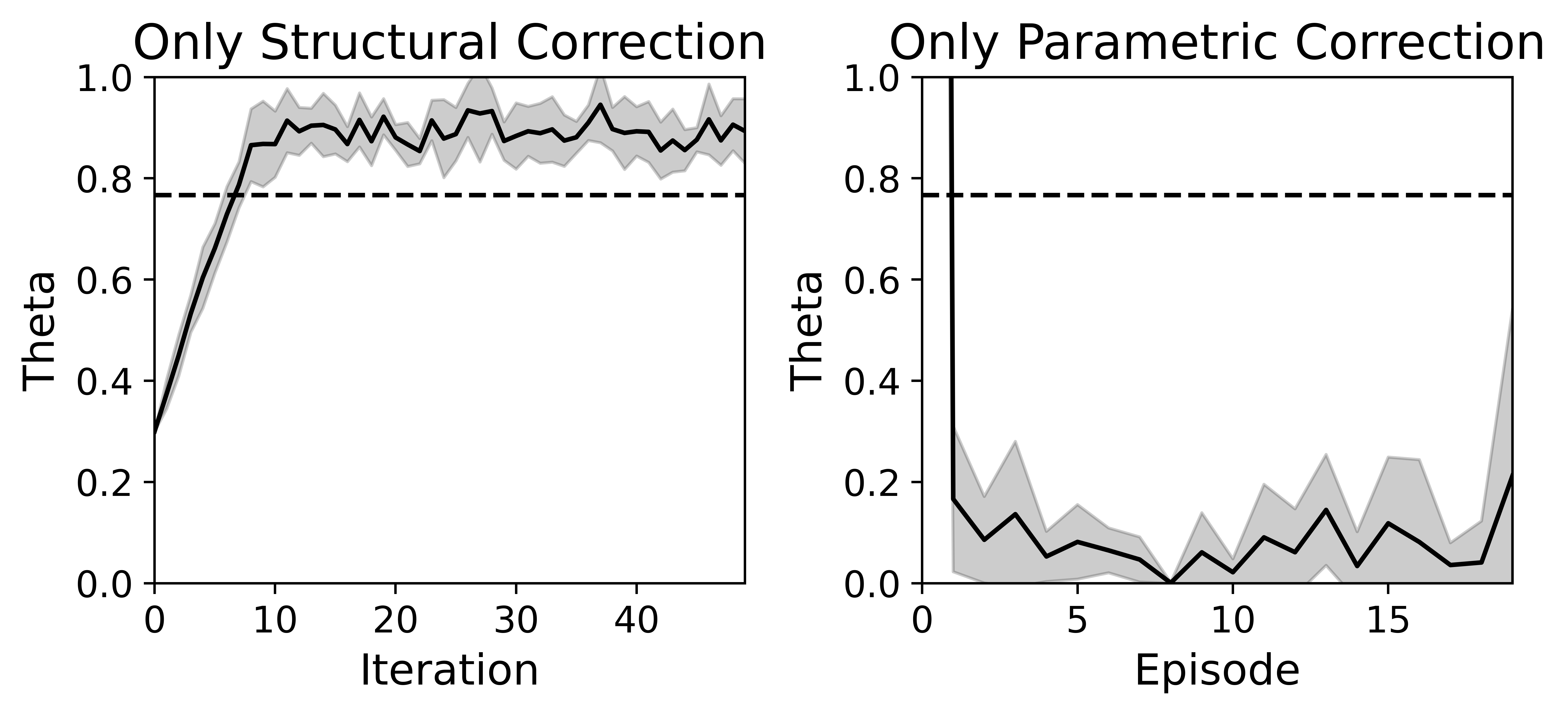}
    \caption{Market tightness $\theta$ across different simulation settings. The left panel shows the situation where there is only structural correction, while the right panel shows the situation where there is only parametric correction. The shaded regions indicate the standard deviations from five independent runs.}
    \label{fig:ablation}
\end{figure}
\section{Convergence of the RL-based Mean Field Fixed-Point Iteration}
\label{app:convergence}
\begin{definition}[Mean Field Update Operator]
Let $\Theta \subseteq \mathbb{R}^d$ denote the space of mean field parameters (e.g., aggregate employment rate, average wage). Let $\Pi$ denote the space of agent policies. Given a mean field $\theta \in \Theta$, define:
\begin{itemize}
    \item The optimal agent policy $\pi^*_\theta \in \Pi$ as the solution to a Markov Decision Process (MDP) with reward and transition dynamics parameterized by $\theta$.
    \item A deterministic mapping $\Phi: \Pi \to \Theta$ that returns the updated mean field resulting from population-level behavior under policy $\pi$.
\end{itemize}
Define the mean field update operator as the composite map:
\[
\Psi: \Theta \to \Theta, \quad \Psi(\theta) := \Phi(\pi^*_\theta)
\]
\end{definition}

\begin{assumption}[Lipschitz Continuity]
\label{ass:lipschitz}
Assume the following:
\begin{itemize}
    \item[(A1)] The policy mapping $\theta \mapsto \pi^*_\theta$ is Lipschitz continuous with constant $L_1 > 0$, i.e.,
    \[
    \|\pi^*_{\theta_1} - \pi^*_{\theta_2}\|_{\text{TV}} \leq L_1 \cdot \|\theta_1 - \theta_2\|
    \]
    for all $\theta_1, \theta_2 \in \Theta$.
    
    \item[(A2)] The mean field feedback $\Phi: \Pi \to \Theta$ is Lipschitz continuous with constant $L_2 > 0$, i.e.,
    \[
    \|\Phi(\pi_1) - \Phi(\pi_2)\| \leq L_2 \cdot \|\pi_1 - \pi_2\|_{\text{TV}}
    \]
    for all $\pi_1, \pi_2 \in \Pi$.
\end{itemize}
\end{assumption}

\begin{theorem}[Convergence to Mean Field Equilibrium]
\label{thm:convergence}
Suppose that Assumptions (A1)--(A2) hold, and that the composite Lipschitz constant $L := L_1 L_2 < 1$. Then:
\begin{enumerate}
    \item The mapping $\Psi: \Theta \to \Theta$ is a contraction mapping.
    \item There exists a unique fixed point $\theta^* \in \Theta$ such that $\Psi(\theta^*) = \theta^*$.
    \item For any initial value $\theta_0 \in \Theta$, the iterates $\theta_{k+1} := \Psi(\theta_k)$ converge exponentially fast to $\theta^*$:
    \[
    \|\theta_k - \theta^*\| \leq L^k \cdot \|\theta_0 - \theta^*\|
    \]
\end{enumerate}
\end{theorem}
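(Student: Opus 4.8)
The plan is to recognize this as a textbook application of the Banach Fixed Point Theorem, so the real content is establishing that $\Psi$ is a contraction on a complete metric space. First I would confirm the ambient setting: $\Theta \subseteq \mathbb{R}^d$ should be taken as a closed (hence complete) subset, and I would note that $\Psi$ maps $\Theta$ into $\Theta$ by construction (the aggregator $\Phi$ returns a mean-field parameter). With that in place, the core estimate is a one-line composition of the two Lipschitz hypotheses: for any $\theta_1,\theta_2 \in \Theta$,
\begin{equation}
\|\Psi(\theta_1) - \Psi(\theta_2)\| = \|\Phi(\pi^*_{\theta_1}) - \Phi(\pi^*_{\theta_2})\| \leq L_2 \|\pi^*_{\theta_1} - \pi^*_{\theta_2}\|_{\text{TV}} \leq L_2 L_1 \|\theta_1 - \theta_2\| = L \|\theta_1 - \theta_2\|,
\end{equation}
using (A2) in the first inequality and (A1) in the second. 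Since $L = L_1 L_2 < 1$ by assumption, this is exactly the definition of a contraction, proving claim 1.

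For claims 2 and 3, I would invoke the Banach Fixed Point Theorem directly: a contraction on a non-empty complete metric space has a unique fixed point $\theta^*$, and the Picard iterates $\theta_{k+1} = \Psi(\theta_k)$ converge to it from any starting point. The exponential rate in claim 3 follows by an immediate induction: assuming $\|\theta_k - \theta^*\| \leq L^k \|\theta_0 - \theta^*\|$, apply the contraction estimate with $\theta_1 = \theta_k$, $\theta_2 = \theta^*$ and use $\Psi(\theta^*) = \theta^*$ to get $\|\theta_{k+1} - \theta^*\| = \|\Psi(\theta_k) - \Psi(\theta^*)\| \leq L\|\theta_k - \theta^*\| \leq L^{k+1}\|\theta_0 - \theta^*\|$, with the base case $k=0$ trivial. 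If I wanted the proof to be self-contained rather than citing Banach, I would instead show the iterates are Cauchy via the geometric-series bound $\|\theta_{m} - \theta_n\| \leq \frac{L^n}{1-L}\|\theta_1 - \theta_0\|$, extract a limit by completeness, and verify it is a fixed point using continuity of $\Psi$ (which is automatic from the Lipschitz bound), with uniqueness following because two fixed points $\theta^*, \theta^{**}$ would satisfy $\|\theta^* - \theta^{**}\| \leq L\|\theta^* - \theta^{**}\|$, forcing equality since $L < 1$.

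Honestly, there is no mathematical obstacle in the theorem as stated — it is a clean corollary of Banach once (A1) and (A2) are granted. The genuinely hard part, which the paper explicitly concedes is not proven and is instead supported only by empirical observation, is verifying that the Lipschitz constants satisfy $L_1 L_2 < 1$ when $\pi^*_\theta$ is produced by a deep RL algorithm (DDPG). I would therefore keep the formal proof confined to the contraction-implies-convergence argument and flag clearly that (A1)–(A2) plus $L < 1$ are assumptions rather than derived facts; any attempt to actually bound $L_1$ (sensitivity of the optimal policy to the mean field) would require quantitative stability results for the parametrized MDP — e.g., smoothness of the Bellman operator's fixed point in $\theta$ — and bounding $L_2$ would require a Lipschitz estimate for the matching-function-based aggregation map, neither of which this theorem attempts.
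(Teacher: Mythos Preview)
Your proposal is correct and follows essentially the same approach as the paper: chain (A2) and (A1) to obtain the contraction estimate $\|\Psi(\theta_1)-\Psi(\theta_2)\|\le L_1 L_2\|\theta_1-\theta_2\|$, then invoke the Banach Fixed Point Theorem for existence, uniqueness, and the exponential rate. Your additional remarks on completeness of $\Theta$, the explicit induction for the rate bound, and the caveat that (A1)--(A2) and $L<1$ are assumed rather than verified are all appropriate and go slightly beyond the paper's terse proof.
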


\begin{proof}
From (A1) and (A2), we have for any $\theta_1, \theta_2 \in \Theta$:
\[
\begin{aligned}
\|\Psi(\theta_1) - \Psi(\theta_2)\|
&= \|\Phi(\pi^*_{\theta_1}) - \Phi(\pi^*_{\theta_2})\| \\
&\leq L_2 \cdot \|\pi^*_{\theta_1} - \pi^*_{\theta_2}\|_{\text{TV}} \\
&\leq L_2 L_1 \cdot \|\theta_1 - \theta_2\| = L \cdot \|\theta_1 - \theta_2\|
\end{aligned}
\]
Since $L < 1$, $\Psi$ is a contraction mapping on a complete metric space. By Banach’s Fixed Point Theorem, $\Psi$ admits a unique fixed point $\theta^*$ and the iteration $\theta_{k+1} = \Psi(\theta_k)$ converges to $\theta^*$ with exponential rate $L^k$. This completes the proof.
\end{proof}
\end{appendices}

\bibliographystyle{plainnat}
\bibliography{sample}

\end{document}